\newdimen\proofrulebreadth \proofrulebreadth=.05em
\newdimen\proofdotseparation \proofdotseparation=1.25ex
\newdimen\proofrulebaseline \proofrulebaseline=2ex
\let\then\relax
\def\hfi{\hskip0pt plus.0001fil}
\mathchardef\squigto="3A3B
\newif\ifinsideprooftree\insideprooftreefalse
\newif\ifonleftofproofrule\onleftofproofrulefalse
\newif\ifproofdots\proofdotsfalse
\newif\ifdoubleproof\doubleprooffalse
\let\wereinproofbit\relax
\newdimen\shortenproofleft
\newdimen\shortenproofright
\newdimen\proofbelowshift
\newbox\proofabove
\newbox\proofbelow
\newbox\proofrulename
\def\shiftproofbelow{\let\next\relax\afterassignment\setshiftproofbelow\dimen0 }
\def\shiftproofbelowneg{\def\next{\multiply\dimen0 by-1 }%
\afterassignment\setshiftproofbelow\dimen0 }
\def\setshiftproofbelow{\next\proofbelowshift=\dimen0 }
\def\setproofrulebreadth{\proofrulebreadth}
\def\prooftree{
%
\ifnum  \lastpenalty=1
\then   \unpenalty
\else   \onleftofproofrulefalse
\fi
%
\ifonleftofproofrule
\else   \ifinsideprooftree
        \then   \hskip.5em plus1fil
        \fi
\fi
%
\bgroup
\setbox\proofbelow=\hbox{}\setbox\proofrulename=\hbox{}%
\let\justifies\proofover\let\leadsto\proofoverdots\let\Justifies\proofoverdbl
\let\using\proofusing\let\[\prooftree
\ifinsideprooftree\let\]\endprooftree\fi
\proofdotsfalse\doubleprooffalse
\let\thickness\setproofrulebreadth
\let\shiftright\shiftproofbelow \let\shift\shiftproofbelow
\let\shiftleft\shiftproofbelowneg
\let\ifwasinsideprooftree\ifinsideprooftree
\insideprooftreetrue
%
\setbox\proofabove=\hbox\bgroup$\displaystyle 
\let\wereinproofbit\prooftree
%
\shortenproofleft=0pt \shortenproofright=0pt \proofbelowshift=0pt
%
\onleftofproofruletrue\penalty1
}
\def\eproofbit{
%
\ifx    \wereinproofbit\prooftree
\then   \ifcase \lastpenalty
        \then   \shortenproofright=0pt  
        \or     \unpenalty\hfil         
        \or     \unpenalty\unskip       
        \else   \shortenproofright=0pt  
        \fi
\fi
%
\global\dimen0=\shortenproofleft
\global\dimen1=\shortenproofright
\global\dimen2=\proofrulebreadth
\global\dimen3=\proofbelowshift
\global\dimen4=\proofdotseparation
\global\count255=\proofdotnumber
%
$\egroup  
%
\shortenproofleft=\dimen0
\shortenproofright=\dimen1
\proofrulebreadth=\dimen2
\proofbelowshift=\dimen3
\proofdotseparation=\dimen4
\proofdotnumber=\count255
}
\def\proofover{
\eproofbit 
\setbox\proofbelow=\hbox\bgroup 
\let\wereinproofbit\proofover
$\displaystyle
}%
\def\proofoverdbl{
\eproofbit 
\doubleprooftrue
\setbox\proofbelow=\hbox\bgroup 
\let\wereinproofbit\proofoverdbl
$\displaystyle
}%
\def\proofoverdots{
\eproofbit 
\proofdotstrue
\setbox\proofbelow=\hbox\bgroup 
\let\wereinproofbit\proofoverdots
$\displaystyle
}%
\def\proofusing{
\eproofbit 
\setbox\proofrulename=\hbox\bgroup 
\let\wereinproofbit\proofusing
\kern0.3em$
}
\def\endprooftree{
\eproofbit 
  \dimen5 =0pt
%
\dimen0=\wd\proofabove \advance\dimen0-\shortenproofleft
\advance\dimen0-\shortenproofright
%
\dimen1=.5\dimen0 \advance\dimen1-.5\wd\proofbelow
\dimen4=\dimen1
\advance\dimen1\proofbelowshift \advance\dimen4-\proofbelowshift
%
\ifdim  \dimen1<0pt
\then   \advance\shortenproofleft\dimen1
        \advance\dimen0-\dimen1
        \dimen1=0pt
        \ifdim  \shortenproofleft<0pt
        \then   \setbox\proofabove=\hbox{%
                        \kern-\shortenproofleft\unhbox\proofabove}%
                \shortenproofleft=0pt
        \fi
\fi
%
\ifdim  \dimen4<0pt
\then   \advance\shortenproofright\dimen4
        \advance\dimen0-\dimen4
        \dimen4=0pt
\fi
%
\ifdim  \shortenproofright<\wd\proofrulename
\then   \shortenproofright=\wd\proofrulename
\fi
%
\dimen2=\shortenproofleft \advance\dimen2 by\dimen1
\dimen3=\shortenproofright\advance\dimen3 by\dimen4
%
\ifproofdots
\then
        \dimen6=\shortenproofleft \advance\dimen6 .5\dimen0
        \setbox1=\vbox to\proofdotseparation{\vss\hbox{$\cdot$}\vss}%
        \setbox0=\hbox{%
                \advance\dimen6-.5\wd1
                \kern\dimen6
                $\vcenter to\proofdotnumber\proofdotseparation
                        {\leaders\box1\vfill}$%
                \unhbox\proofrulename}%
\else   \dimen6=\fontdimen22\the\textfont2 
        \dimen7=\dimen6
        \advance\dimen6by.5\proofrulebreadth
        \advance\dimen7by-.5\proofrulebreadth
        \setbox0=\hbox{%
                \kern\shortenproofleft
                \ifdoubleproof
                \then   \hbox to\dimen0{%
                        $\mathsurround0pt\mathord=\mkern-6mu%
                        \cleaders\hbox{$\mkern-2mu=\mkern-2mu$}\hfill
                        \mkern-6mu\mathord=$}%
                \else   \vrule height\dimen6 depth-\dimen7 width\dimen0
                \fi
                \unhbox\proofrulename}%
        \ht0=\dimen6 \dp0=-\dimen7
\fi
%
\let\doll\relax
\ifwasinsideprooftree
\then   \let\VBOX\vbox
\else   \ifmmode\else$\let\doll=$\fi
        \let\VBOX\vcenter
\fi
\VBOX   {\baselineskip\proofrulebaseline \lineskip.2ex
        \expandafter\lineskiplimit\ifproofdots0ex\else-0.6ex\fi
        \hbox   spread\dimen5   {\hfi\unhbox\proofabove\hfi}%
        \hbox{\box0}%
        \hbox   {\kern\dimen2 \box\proofbelow}}\doll%
%
\global\dimen2=\dimen2
\global\dimen3=\dimen3
\egroup 
\ifonleftofproofrule
\then   \shortenproofleft=\dimen2
\fi
\shortenproofright=\dimen3
%
\onleftofproofrulefalse
\ifinsideprooftree
\then   \hskip.5em plus 1fil \penalty2
\fi
}
\def\df{{\stackrel{\mathrm{def}}{\Longleftrightarrow}}}
\newcommand{\pts}{\mathit{pts}}
\newcommand{\ptsp}{\mathit{pts}^\prime}
\newcommand{\lb}{\llbracket}
\newcommand{\rb}{\rrbracket}
\newcommand{\ra}{\rightarrow}
\newcommand{\Ra}{\Longrightarrow}
\newcommand{\while}{\mathit{while\ b\ do\ } S_t}
\newcommand{\ifs}{\mathit{if\ b\ then\ }S_t \mathit{\ else\ }S_f}
\title{Probabilistic pointer analysis for multithreaded programs}
\author{Mohamed A. El-Zawawy} 
\address{College of Computer and Information Sciences\\
Al-Imam M. I.-S. I. University\\ Riyadh 11432\\Kingdom of Saudi
Arabia
\\ and \\Department of
Mathematics\\ Faculty of Science \\Cairo University\\ Giza 12613\\
Egypt}
\ead{maelzawawy@cu.edu.eg} 
\abstract{ The use of pointers and data-structures based on pointers
results in circular memory references that are interpreted by a
vital compiler analysis, namely pointer analysis. For a pair of
memory references at a program point, a typical pointer analysis
specifies if the points-to relation between them may exist,
definitely does not exist, or definitely exists. The "may be" case,
which describes the points-to relation for most of the pairs, can
not be dealt with by most compiler optimizations. This is so to
guarantee the soundness of these optimizations. However the "may be"
case can be capitalized by the modern class of speculative
optimizations if the probability that two memory references alias
can be measured. Focusing on multithreading, a prevailing technique
of programming, this paper presents a new flow-sensitive technique
for probabilistic pointer analysis of multithreaded programs. The
proposed technique has the form of a type system and calculates the
probability of every points-to relation at each program point. The
key to our approach is to calculate the points-to information via a
post-type derivation. The use of type systems has the advantage of
associating each analysis results with a justification (proof) for
the correctness of the results. This justification has the form of a
type derivation and is very much required in applications like
certified code. }
\keywords{Static analysis, Speculative optimizations, Probabilistic
alias analysis, Distributed programs, Semantics of multithreaded
programs, Type systems.}
\begin{document}
\frontmatter

\section{Introduction}\label{intro}

Multithreading is enjoying a growing interest and becoming a
prevailing technique of programming. The use of multiple threads has
several advantages: (a) concealing the delay of commands like
reading from a secondary storage (b) improving the action of
programs, like web servers, that run on multiprocessors, (c)
building complex systems for user interface, (d) simplifying the
process of organizing huge systems of code. However the static
analysis of multithreaded programs~\cite{Gelado10,Leung09,Xiao10} is
intricate due to the possible interaction between multiple threads.

Among effective tools of modern programming languages are pointers
which empower coding intricate data structures. Not only does the
uncertainty of pointer values at compile time complicate analysis of
programs, but also retard program compilation by compelling the
program optimization and analysis to be conservative. The pointer
analysis~\cite{El-Zawawy11,El-Zawawy11-2,Adams02} of programs is a
challenging problem in which researchers have trade space and time
costs for precision. However binary decision
diagrams~\cite{Berndl03} have been used to ease the difficulty of
this trade off.

At any program point and for every pair of memory references, a
traditional pointer analysis figures out whether one of these
references may point to, definitely points to, or definitely does
point to the other reference. For most of pairs of the memory
references the points-to relation is of type "may be". This is
specially the case for techniques that prefer speed over accuracy.
Traditional optimization techniques are not robust enough to treat
the cases "may be" and "definitely" differently. The idea behind
speculative optimization is to subsidize the "maybe" case, specially
if the probability of "maybe" cab be specifically
quantified~\cite{Chen04,Silva06}.

Pointer analysis~\cite{Yu10,Anderson02} is among most important
program analyses of multithreaded programs. Pointer analysis of
multithreaded programs has many applications; (a) mechanical binding
of file operations that are in abeyance, (b) optimizations for
memory systems like prefetching and relocating remote data
calculations, (c) equipping compilers with necessary information for
optimizations like common subexpression elimination and induction
variable elimination, and (d) relaxing the process of developing
complex tools for software engineering like program slicers and race
detectors.

This paper presents a new technique for pointer analysis of
multithreaded programs. The proposed technique is probabilistic; it
anticipates precisely for every program point the probability of
every points-to relation. Building on a type system, the proposed
approach is control-flow-sensitive. The key to the presented
analysis is to calculate probabilities for points-to relations
through the compositional use of inference rules of a type system.
The proposed technique associates with every analysis a proof (type
derivation) for the correctness of the analysis.

Among techniques to approach static analysis of programs is the
algorithmic style. However the proposed technique of this paper has
the form of a type system. The algorithmic style does not reflect
how the analysis results are obtained because it works on
control-flow graphs of programs; not on phrase structures as in the
case of type systems. Therefore the type-systems
approach~\cite{Bertot04,Laud06,Saabas08,El-Zawawy11} is perfect for
applications that require to handle a justifications (proof) for
correctness of analsys results together with each individual
analysis. An example of such applications is certified code. What
contributes to suitability of type-systems tools to produce such
proofs is the relative simplicity of its inference rules. This
simplicity is a much appreciated property in applications that
require justifications. In type-systems approach, the justifications
take the form of type derivations.

\subsection*{Motivation}
\begin{figure}[thp]
{\footnotesize{\begin{center}
\begin{tabular}{l}
  $1.\quad a\coloneqq \& c;$ \\
  $2.\quad if (\ldots) \textit{ then } b\coloneqq
  \&c $\\
  $3.\hspace{1.2cm}\textit{ else } b\coloneqq  \&d;$ \\
  $4.\quad \textit{par}\{$ \\
  $5.\hspace{.8cm}  \{a\coloneqq \& c\}$  \\
  $6.\hspace{.8cm}  \{a\coloneqq \& d\}$\\
  $7.\hspace{.8cm}\};$  \\
  $8.\quad\textit{while}(\ldots)$  \\
  $9.\hspace{.8cm} if (\ldots) \textit{ then } e\coloneqq
  \&d $\\
  $10.\hspace{1.5cm}\textit{ else } e\coloneqq  5;$ \\
   \end{tabular}\end{center}
\caption{A motivating example.}\label{example1} }}\end{figure}
\begin{figure*}
\begin{center}\begin{tabular}{|l|l|} \hline
Program point   & Pointer information \\
\hline
first point& $\{t\mapsto\emptyset\mid t\in\textit{Var}\}$ \\
between lines 1 \& 2   &
$\{a\mapsto\{(c^\prime,1)\},t\mapsto\emptyset\mid x\not= t\}$   \\
point between 3 \& 4   &
$\{a\mapsto\{(c^\prime,1)\},b\mapsto\{(c^\prime,0.6),(d^\prime,0.4)\},$\\
& $t\mapsto\emptyset\mid t\notin\{a,b\}\}$ \\
point between 7 \& 8
&$\{a\mapsto\{(c^\prime,0.5),(d^\prime,0.5)\},$\\
& $b\mapsto\{(c^\prime,0.6),(d^\prime,0.4)\},t\mapsto\emptyset
\mid t\notin\{a,b\}\}$ \\
last point &$\{a\mapsto\{(c^\prime,0.5),(d^\prime,0.5)\},
e\mapsto\{(d^\prime,\frac{1}{100}\times
\Sigma_{i=1}^{i=100}(\frac{1}{2})^i)\}$\\
& $b\mapsto\{(c^\prime,0.6),(d^\prime,0.4)\},t\mapsto\emptyset
\mid t\notin\{a,b,c\}\}$ \\
\hline
\end{tabular}\end{center}
    \caption{Results of pointer analysis of program in Figure 1.}\label{pointer result}
\end{figure*}
Figure~\ref{example1} presents a motivating example of our work.
This example uses three pointer variables ($a$, $b$, and $e$) that
point at two variables ($c$ and $d$). We suppose that (i) the
condition of the \textit{if} statement at line $2$ is true with
probability $0.6$, (ii) the condition of the \textit{if} statement
at line $9$ is true with probability $0.5$, and (iii) the loop at
line $8$ iterates at most $100$ times. These statistical and
probabilistic information can be obtained using edge
profiling~\cite{Bond05,Anderson97,Suganuma05,Vaswani05}. In absence
of edge profiling, heuristics can be used. The work presented in
this paper aims at introducing a probabilistic pointer analysis that
produces results like that in Figure~\ref{pointer result}. The aim
is also to associate each such pointer-analysis result with a
justification for the correctness of the result. This justification
takes the form of a type derivation in our proposed technique which
is based on a type system.

\subsection*{Contributions}
Contributions of this paper are the following:
\begin{enumerate}
\item A new pointer analysis technique, that is probabilistic and
flow-sensitive, for multithreaded programs.
\item A new probabilistic operational-semantics for multithreaded programs.
\end{enumerate}

\subsubsection*{Organization}
The remainder of the paper is organized in three sections as
follows. The first of these sections presents a simple language
equipped with parallel and pointer constructs. This section also
presents a new probabilistic operational semantics for the
constructs of the language that we study. The second of these
sections introduces a type system to carry probabilistic pointer
analysis of parallel programs. This involves introducing suitable
notions for pointer types, a subtyping relation, and a detailed
proof for the soundness of the proposed type system w.r.t. the
semantics presented in the paper. Related work is reviewed in the
last section of the paper.

\section{Probabilistic operational semantics}\label{semantics}
This section presents the programming language we study and a
probabilistic pointer analysis for its constructs. We build our
language (Figure~\ref{lang}) on the \textit{while} language,
originally presented by Hoare in 1969, by equipping it with commands
dealing with pointers and parallel computations. The parallel
concepts dealt with in our language are fork-join, conditionally
spawned threads, and parallel loops. These concepts are represented
by commands \textit{par}, \textit{par-if}, and \textit{par-for}),
respectively. Sates of our proposed operational semantics are
defined as follows:
\begin{definition}
\begin{enumerate}
\item $\textit{Addrs}=\{x^\prime\mid x\in \textit{Var}\}$ and $
{{\textit{Val}} = {\mathbb{Z} \cup \textit{Addrs}}}$.
\item $\gamma\in\Gamma= \textit{Var}\rightarrow \textit{Val}$.
\item $\textit{state} \in\textit{States}={\{(\gamma,p)
\mid \gamma\in \Gamma \wedge p\in[0,1]\}}\cup\{\textit{abort}\}$.
\end{enumerate}
\end{definition}
Typically, a state is a function from the set of variables to the
set of values (integers). In our work, we enrich the set of values
with a set of symbolic addresses and enrich each state with a
probabilistic value that is meant to measure the probability with
which this state is reached. The \textit{abort} state is there to
capture any case of de-reference that is unsafe; i.e de-referencing
a variable that contains no address. We assume that the set of
program variables, $\textit{Var}$, is finite.
\begin{figure*}[h]
\begin{eqnarray*}
&  & {n}\in {\mathbb{Z}},\ {x}\in {\textit{Var}},\ \textit{and}\
{\oplus}
\in{\{+,-,\times\}}\\
e \in \textit{Aexprs}  &  \Coloneqq & {x}\mid  {n} \mid {e_1 \oplus e_2}\\
b\in \textit{Bexprs} &  \Coloneqq & {\textit{true}} \mid
{\textit{false}} \mid {\neg b}\mid {e_1 = e_2} \mid {e_1 \leq e_2}
\mid{b_1 \wedge b_2}
\mid {b_1 \vee b_2} \\
S\in \textit{Stmts} &  \Coloneqq & {x\coloneqq e}\mid {x\coloneqq \&
y}\mid {*x\coloneqq e}\mid {x\coloneqq *
y}\mid{\textit{skip}}\mid {S_1;S_2}\mid {\ifs} \mid \\
&  & {\while}\mid \textit{par}\{\{S_1\},\ldots,\{S_n\}\}\mid
\textit{par-if}\{(b_1,S_1),\ldots,(b_n,S_n)\}\mid
\textit{par-for}\{S\}.
\end{eqnarray*}
\caption{The programming language.}\label{lang}
\end{figure*}

Except that arithmetic and Boolean operations are not allowed on
pointers, the semantics of arithmetic and Boolean expressions are
defined as usual (Figure~\ref{sem}). The inference rules of
Figure~\ref{inf} define the transition relation $\rightsquigarrow$
of our operational semantics.

\begin{figure*}
{\footnotesize{
\[
{\lb n\rb \gamma} = {n}\quad {\lb \& x\rb \gamma} = {x^\prime}\quad
{\lb x\rb \gamma} = {\gamma(x)}\quad {\lb \textit{true}\rb \gamma} =
{\textit{true}}\quad {\lb \textit{false}\rb \gamma} =
{\textit{false}}
\]\[
{\lb *x\rb \gamma} = \left\{
\begin{array}{ll}
\gamma(y) &\mbox{if }{\gamma(x)}={y^\prime}, \\
! & \mbox{otherwise.}
\end{array} \right.
\quad {\lb e_1\oplus e_2\rb \gamma} = \left\{\begin{array}{ll} {\lb
e_1\rb \gamma}\oplus {\lb e_2\rb \gamma} & \mbox{if }{\lb e_1\rb
\gamma,\lb e_2\rb \gamma}\in{\mathbb{Z}}, \\
! &\mbox{otherwise.}
\end{array}
\right.
\]\[
{\lb \neg A\rb \gamma} = \left\{
\begin{array}{ll}
{\neg (\lb A\rb \gamma)} &\mbox{if }{\lb A\rb \gamma} \in
{\{\textit{true}, \textit{false}\},} \\! &\mbox{otherwise.}
\end{array} \right. \quad
{\lb e_1= e_2\rb \gamma} =\left\{\begin{array}{ll} ! & \mbox{if }
{\lb e_1\rb \gamma}= {!}
\mbox{ or } {\lb e_2\rb \gamma}= {!}, \\
\textit{true} & \mbox{if }{\lb e_1\rb \gamma}= { \lb e_2\rb
\gamma}\not = {!},\\ \textit{false} & \mbox{otherwise}.
\end{array}\right.
\]\[
{\lb e_1\le e_2\rb \gamma} =\left\{\begin{array}{ll} ! & \mbox{if }
{\lb e_1\rb \gamma}\not \in \mathbb{Z}
\mbox{ or } {\lb e_2\rb \gamma}\not \in \mathbb{Z}, \\
{\lb e_1\rb \gamma}\le { \lb e_2\rb \gamma} & \mbox{otherwise}.
\end{array}\right.
\]\[
\mbox{For }{\diamond} \in{\{\wedge,\vee\}},\  {\lb b_1\diamond
b_2\rb \gamma} =\left\{\begin{array}{ll} ! & \mbox{if } {\lb b_1\rb
\gamma}= {!}
\mbox{ or } {\lb b_2\rb \gamma}= {!}, \\
{\lb b_1\rb \gamma}\diamond { \lb b_2\rb \gamma} & \mbox{otherwise}.
\end{array}\right.
\]}}
  \caption{Semantics of arithmetic and Boolean expressions.}\label{sem}
\end{figure*}
\begin{figure*}
{\footnotesize{
\[
\begin{prooftree}
{\lb e\rb\gamma} = {!}\justifies  x \coloneqq e:
(\gamma,p)\rightsquigarrow \textit{abort}\thickness=0.08em
\end{prooftree}\quad
\begin{prooftree}
{\lb e\rb\gamma} \not= {!}\justifies  x \coloneqq e:(\gamma,p)
\rightsquigarrow (\gamma[x\mapsto\lb e\rb\gamma],p)\thickness=0.08em
\end{prooftree} \quad
\begin{prooftree} \gamma(x) =z^\prime \quad z \coloneqq e:(\gamma,p)
\rightsquigarrow \textit{state}\justifies *x \coloneqq e:
(\gamma,p)\rightsquigarrow \textit{state}\thickness=0.08em
\end{prooftree}
\]\[
\begin{prooftree} \gamma(x)\notin \textit{Addrs}
\justifies *x \coloneqq e: (\gamma,p)\rightsquigarrow
\textit{abort}\thickness=0.08em \end{prooftree}\quad
\begin{prooftree}
\justifies  x \coloneqq \& y:(\gamma,p) \rightsquigarrow
(\gamma[x\mapsto y^\prime],p)\thickness=0.08em
\end{prooftree} \quad\begin{prooftree} \gamma(y)\notin \textit{Addrs}
 \justifies x \coloneqq *y: (\gamma,p)\rightsquigarrow
\textit{abort}\thickness=0.08em \end{prooftree}
\]\[
\begin{prooftree} \gamma(y) =z^\prime \quad x \coloneqq z:
(\gamma,p)\rightsquigarrow (\gamma^\prime,p) \justifies x \coloneqq
* y: (\gamma,p)\rightsquigarrow (\gamma^\prime,p) \thickness=0.08em
\end{prooftree} \quad
\begin{prooftree}
\justifies  \textit{skip}: (\gamma,p) \rightsquigarrow
(\gamma,p)\thickness=0.08em
\end{prooftree}
\quad
\begin{prooftree}
S_1:(\gamma,p) \rightsquigarrow \textit{abort}\justifies S_1;S_2:
(\gamma,p) \rightsquigarrow \textit{abort}\thickness=0.08em
\end{prooftree}
\]\[
\begin{prooftree}
S_1:(\gamma,p) \rightsquigarrow
(\gamma^{\prime\prime},p^{\prime\prime})\quad
S_2:(\gamma^{\prime\prime},p^{\prime\prime})\rightsquigarrow
\textit{state}\justifies S_1;S_2: (\gamma,p) \rightsquigarrow
\textit{state}\thickness=0.08em
\end{prooftree}
\quad
\begin{prooftree}
{\lb b\rb \gamma}= {!} \justifies \ifs:(\gamma,p)
\rightsquigarrow\textit{abort}\thickness=0.08em
\end{prooftree}
\]\[
\begin{prooftree}
{\lb b\rb \gamma}= {\textit{true}}  \quad S_t:(\gamma,p)
\rightsquigarrow \textit{abort} \justifies \ifs:(\gamma,p)
\rightsquigarrow \textit{abort}\thickness=0.08em
\end{prooftree}
\quad
\begin{prooftree}
{\lb b\rb \gamma}= {\textit{true}}  \quad S_t:(\gamma,p)
\rightsquigarrow (\gamma^\prime,p^\prime) \justifies \ifs:(\gamma,p)
\rightsquigarrow (\gamma^\prime,p_{if}\times
p^\prime)\thickness=0.08em
\end{prooftree}
\]\[
\begin{prooftree}
{\lb b\rb \gamma}= {\textit{false}}  \quad S_f:(\gamma,p)
\rightsquigarrow \textit{abort} \justifies \ifs:(\gamma,p)
\rightsquigarrow \textit{abort}\thickness=0.08em
\end{prooftree}
\quad
\begin{prooftree}
{\lb b\rb \gamma}= {\textit{false}}  \quad S_f:(\gamma,p)
\rightsquigarrow (\gamma^\prime,p^\prime) \justifies \ifs:(\gamma,p)
\rightsquigarrow (\gamma^\prime,(1-p_{if})\times
p^\prime)\thickness=0.08em
\end{prooftree}
\]\[
\begin{prooftree} {\lb b\rb \gamma}= {!} \justifies
\while:(\gamma,p) \rightsquigarrow \textit{abort}\thickness=0.08em
\end{prooftree}
\quad
\begin{prooftree} {\lb b\rb \gamma}= {\textit{false}}
\justifies \while:(\gamma,p)
\rightsquigarrow(\gamma,p)\thickness=0.08em
\end{prooftree}
\]\[
\begin{prooftree}
{\lb b\rb \gamma}= {\textit{true}}  \quad S:(\gamma,p)
\rightsquigarrow \textit{abort} \justifies \while:(\gamma,p)
\rightsquigarrow \textit{abort}\thickness=0.08em
\end{prooftree}
\]\[
\begin{prooftree}
{\lb b\rb \gamma}= {\textit{true}}  \quad S:(\gamma,p)
\rightsquigarrow (\gamma^{\prime\prime},p^{\prime\prime}) \quad
\while:(\gamma^{\prime\prime},p^{\prime\prime}) \rightsquigarrow
\textit{state} \justifies \while:(\gamma,p) \rightsquigarrow
\textit{state}\thickness=0.08em
\end{prooftree}
\]}}
\begin{itemize}
\item[$\bullet$] \textbf{Fork-join}: {\footnotesize{\[
\begin{prooftree}
(\exists\ \theta:\{1,\ldots,n\}\ra\{1,\ldots,n\}).\
S_{\theta(1)};S_{\theta(2)};\ldots;S_{\theta(n)}:(\gamma,p)
\rightsquigarrow(\gamma^\prime,p^\prime) \justifies
\textit{par}\{\{S_1\},\ldots,\{S_n\}\}:(\gamma,p)
\rightsquigarrow(\gamma^\prime,\frac{1}{n!}\times
p^\prime)\thickness=0.08em\using{(\textit{par-sem})}
\end{prooftree}\]\[
\begin{prooftree}
(\exists\ \theta:\{1,\ldots,n\}\ra\{1,\ldots,n\}).\
S_{\theta(1)};S_{\theta(2)};\ldots;S_{\theta(n)}:(\gamma,p)
\rightsquigarrow\textit{abort} \justifies
\textit{par}\{\{S_1\},\ldots,\{S_n\}\}:(\gamma,p)
\rightsquigarrow\textit{abort}\thickness=0.08em
\end{prooftree}\]
}}
\item[$\bullet$] \textbf{Conditionally spawned threads}:
{\footnotesize{
\[\begin{prooftree}
\textit{par}\{\{\mathit{if\ b_1\ then\ }S_1 \mathit{\ else\
skip}\},\ldots,\{\mathit{if\ b_n\ then\ }S_n \mathit{\ else\
skip}\}\}:(\gamma,p)\rightsquigarrow \textit{state}\justifies
\textit{par-if}\{(b_1,S_1),\ldots,(b_n,S_n)\}:(\gamma,p)
\rightsquigarrow \textit{state}\thickness=0.08em
\end{prooftree}\]}}
\item[$\bullet$] \textbf{Parallel loops}: {\footnotesize{\[
\begin{prooftree}
\exists n.\
\textit{par}\{\overbrace{\{S\},\ldots,\{S\}}^{n-times}\}:
(\gamma,p)\rightsquigarrow\textit{state}\justifies
\textit{par-for}\{S\}:(\gamma,p) \rightsquigarrow
\textit{state}\thickness=0.08em
\end{prooftree}\]
}} \end{itemize} \caption{Inference rules of the
semantics.}\label{inf}
\end{figure*}
We notice that none of the assignment statements changes the
probability component of a given pre-state to produce the
corresponding post-state. The symbol $p_{if}$ used in the inference
rules of the \textit{if} statement denotes a number in $[0,1]$ and
measures the probability that the condition of the statement is
true. This probabilistic information can be obtained using edge
profiling~\cite{Bond05,Anderson97,Suganuma05,Vaswani05}. In absence
of edge profiling, heuristics can be used.

The \textit{par} command is the main parallel concept. This concept
is also known as cobegin-coend or fork-join. The execution of this
command amounts to starting concurrently executing the threads of
the command at the beginning of the construct and then to wait for
the completion of these executions at the end of the construct. Then
the subsequent command can be executed. The inference rule
(\textit{par-sem}) approximates the execution methodology of the
\textit{par} command. The probability $p^\prime$ in the rule
(\textit{par-sem}) is multiplied by $\frac{1}{n!}$ (not by
$\frac{1}{n}$ as the reader may expect) because the permutation
$\theta$ finds one of the $n!$ ways in which the threads can be
sorted and then executed. As an example, the reader may consider
applying the rule \textit{par-sem} when $n=3$ and the threads are
$S_1: a\coloneqq b+c, S_2: b\coloneqq a\times c, \hbox{and } S_3:
c\coloneqq a-b$. The semantics of \textit{par-if} and
\textit{par-for} commands are defined using that of the \textit{par}
command.

\section{Probabilistic pointer analysis}\label{pointers}
The purpose of a typical pointer analysis is to assign to every
program point a points-to function. The domain of this function is
the set of all pairs of pointers and the codomain is the set
$\{\textit{definitely exists, definitely does not exist, may
exist}\}.$ The codomain describes the points-to relation between
pairs of memory references. For most of the pointer pairs, the
points-to relation is "may exist". This is specially the case for
techniques of pointer analysis that give priority for speed over
efficiency. The common drawback for most existing program
optimization techniques is that  they can not treat the "maybe" and
"definitely does not exist" cases differently. Speculative
optimizations are meant to overcome this disadvantage via working on
the result of analyses that can measure the probability that a
points-to  relation exist between two pointers.

This section presents a new technique for probabilistic pointer
analysis for multithreaded programs. The technique has the form of a
type system and its goal is to accurately calculate the likelihood
at each program point for every points-to relation. The advantages
of the proposed technique include the simplicity of the inference
rules of the type system and that no dependence profile information
(information describing dependencies between threads) is required.
Dependence profile information, required by some multithreading
techniques like~\cite{Steffan05}, is expensive to get. The proposed
technique is flow-sensitive. The key to our technique is to
calculate points-to probabilities via a post type derivation for a
given program using the bottom points-to type as a pre type.

The following definition presents some notations that are used in
the rest of the paper.

\begin{definition}\label{def2}
\begin{enumerate}
\item $\textit{Addrs} = \{ x^\prime\mid x\in \textit{Var}\}$ and
${\textit{Addrs}_p} = {\textit{Addrs}\times [0,1]}$.
\item $\textit{Pre-PTS} = \{ \pts\mid \pts:\textit{Var}
\ra 2^{\textit{Addrs}_p} $ s.t. $ {\forall y\in \textit{Var}.}\
(y^\prime,p_1),(y^\prime,p_2)\in \pts(x)\Ra p_1=p_2\}$.
\item For $\pts\in\textit{Pre-PTS}$ and $x\in \textit{Var},$ $
{{\sum_\pts x}={\sum_{(z^\prime,p)\in\pts(x)}p}}$.
\item For every $\pts\in\textit{Pre-PTS}$ and $x\in \textit{Var},$ $
{A_\pts(x)=\{z^\prime\mid \exists p>0.\ (z^\prime,p)\in\pts(x)\}}$.
\item For $A\in\textit{Addrs}_p,\pts\in \textit{Pre-PTS},$ and $0\le q\le 1,$
    \begin{enumerate}
    \item ${A\times q}={\{(y^\prime,p\times q)\mid (y^\prime,p)\in A\}}$.
    \item $\pts\times q$ is the function defined by
    ${(\pts\times q)(x)} = {\pts(x)\times q}.$
    \end{enumerate}
\end{enumerate}
\end{definition}

We note that the set of symbolic addresses \textit{Addrs} is
enriched with probabilities to form the set $\textit{Addrs}_p$. In
line with real situations, the condition on the elements of
\textit{Pre-PTS} excludes maps that assign the same address for a
variable with two different probabilities. The notation $\sum_\pts
x$ denotes the probability that the variables $x$ has an address
with respect to $\pts$. The notation $A_\pts(x)$ denotes the set of
addresses that have a non-zero probability to get into $x$. The
multiplication operations of Definition~\ref{def2}.5 are necessary
to join many points-to types (each with a different probability)
into one type.

A formalization for the concepts of the set of points-to types
\textit{PTS}, the subtyping relation $\le$, and the relation
${\models}\subseteq {\Gamma\times \textit{PTS}}$ are in the
subsequent definition.

\begin{definition}\label{order}
\begin{enumerate}
\item $\mathit{PTS} = \{ \pts\in \textit{Pre-PTS}\mid {\forall x\in\textit{Var}.\
\sum_\pts x \leq 1}\}$.
\item $\pts \le \ptsp\ \df\ \forall x.\ A_\pts(x)\subseteq A_{\pts^\prime}(x)$.
\item $\pts \equiv \ptsp\ \df\ \forall x.\ A_\pts(x)= A_{\pts^\prime}(x)$.
\item $(\gamma,p)\models\pts \ \df\ (\forall x.\
\gamma(x)\in \textit{Addrs} $ ${\Ra \exists q> 0.\ (\gamma(x),q)\in
\pts(x))}$.
\end{enumerate}
\end{definition}

A way to calculate an upper bound for a set of $n$ points-to types
is introduced in the following definition.

\begin{definition}\label{nabla}
Suppose $\pts_1,\ldots,\pts_n$ is a sequence of $n$ points-to types
and $0\le q_1,\ldots,q_n\le 1$ is a sequence of $n$ numbers  whose
sum is less than or equal to $1$. Then
$\nabla((\pts_1,q_1),\ldots,(\pts_n,q_n)):\textit{Var} \ra
2^{\textit{Addrs}_p}$ is the function defined by:
\[\nabla((\pts_1,q_1),\ldots,(\pts_n,q_n))(x)=\]\[\{(z^\prime,p)
\mid (\exists i.\ z^\prime\in A_{\pts_i}(x)) \wedge
(p=\Sigma_{(z^\prime,p_k)\in \pts_k(x)} q_k\times p_k)\}.\]
\end{definition}

We note that the order of the points-to lattice is the point-wise
inclusion. However probabilities are implicitly taken into account
in the definition of supremum which is based on
Definition~\ref{nabla}. Letting the probabilities of points-to
relations be involved in the definition of the order relation
complicates the formula of calculating the lattice supremum. Besides
that this complication is not desirable, introducing probabilities
apparently does not improve the type system results. The definition
for $(\gamma,p)\models\pts$ makes sure that a variable that has an
address under $\gamma$ is allowed (positive probability) to contain
the same address under $\pts$. As for Definition~\ref{nabla}, we can
interpret the elements of the sequence $q_1,\ldots,q_n$ as weights
for the elements of the sequence $\pts_1,\ldots,\pts_n$,
respectively. Therefore the map
$\nabla((\pts_1,q_1),\ldots,(\pts_n,q_n))$ joins
$\pts_1,\ldots,\pts_n$ into one type with respect to the weights.

The following lemma proves that the upper bound of the previous
definition is indeed a points-to type.
\begin{lemma}\label{lem3}
The map $\nabla((\pts_1,q_1),\ldots,(\pts_n,q_n))$ of previous
definition is a points-to type.
\end{lemma}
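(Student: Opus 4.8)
The plan is to unfold the definition of $\mathit{PTS}$ (Definition~\ref{order}.1) and check its two requirements for the map $\nabla := \nabla((\pts_1,q_1),\ldots,(\pts_n,q_n))$: first that $\nabla$ is a legitimate element of $\textit{Pre-PTS}$ — a function $\textit{Var}\ra 2^{\textit{Addrs}_p}$ that never assigns one address two different probabilities — and second the mass bound $\sum_\nabla x \le 1$ for every $x\in\textit{Var}$. I expect the first requirement to be essentially immediate and the second to carry the real content.

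For the $\textit{Pre-PTS}$ condition I would fix $x$ and an address $z^\prime$. By Definition~\ref{nabla} every pair $(z^\prime,p)\in\nabla(x)$ has $p$ equal to the single weighted sum $\sum_{(z^\prime,p_k)\in\pts_k(x)} q_k\times p_k$; since each $\pts_k$ is a points-to type and hence itself single-valued, the probability $p_k$ attached to $z^\prime$ in $\pts_k(x)$ is uniquely determined, so the entire sum depends on $z^\prime$ alone. Thus two pairs $(z^\prime,p_1),(z^\prime,p_2)\in\nabla(x)$ force $p_1=p_2$, giving $\nabla\in\textit{Pre-PTS}$. That each such $p$ lies in $[0,1]$ (so that $\nabla(x)\subseteq\textit{Addrs}_p$) I would defer: nonnegativity is clear because every $q_k,p_k\ge 0$, and the upper bound $p\le 1$ falls out of the mass bound below, since $p$ is one nonnegative summand of $\sum_\nabla x$.

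The heart of the argument is the identity $\sum_\nabla x = \sum_{k=1}^{n} q_k\cdot\bigl(\sum_{\pts_k}x\bigr)$, obtained by a Fubini-style exchange in a finite double sum. Fixing $x$, I would write $\sum_\nabla x = \sum_{(z^\prime,p)\in\nabla(x)} p$, substitute the formula for $p$ from Definition~\ref{nabla}, and then interchange the outer sum over addresses $z^\prime$ with the inner sum over thread indices $k$. After the swap, for each fixed $k$ the inner sum over the addresses occurring in $\pts_k(x)$ collapses to $\sum_{\pts_k}x$ by Definition~\ref{def2}.3, leaving exactly $\sum_k q_k\cdot(\sum_{\pts_k}x)$. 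Because each $\pts_k$ is a points-to type we have $\sum_{\pts_k}x\le 1$, and because $q_k\ge 0$ with $\sum_k q_k\le 1$ by hypothesis, the estimate closes: $\sum_\nabla x \le \sum_k q_k \le 1$.

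I expect the only genuine obstacle to be bookkeeping inside the exchange, namely reconciling the existential guard $\exists i.\ z^\prime\in A_{\pts_i}(x)$ in Definition~\ref{nabla} — which ranges over addresses of strictly positive probability — with the summation condition $(z^\prime,p_k)\in\pts_k(x)$, which may also pick up addresses of probability $0$. The resolution is that any $z^\prime$ contributing positively to $\sum_{\pts_k}x$ already lies in $A_{\pts_k}(x)$ and is therefore captured by the guard, while every zero-probability entry contributes nothing on either side; hence the rearrangement neither loses nor double-counts mass, and the collapse to $\sum_{\pts_k}x$ is exact. I would also flag explicitly that the nonnegativity $q_k\ge 0$ is what allows the per-thread bounds $\sum_{\pts_k}x\le 1$ to be combined without reversing the inequality, completing the proof that $\nabla$ is a points-to type.
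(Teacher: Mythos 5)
Your proposal is correct and follows essentially the same route as the paper: the paper also reduces the claim to the bound $\sum_j t_j \le 1$ and obtains it by the same finite double-sum interchange, $\sum_j\sum_k q_k p_{jk} = \sum_k q_k\bigl(\sum_j p_{jk}\bigr) \le \sum_k q_k \le 1$, merely presented as a matrix--vector product with zero entries padding the addresses outside $A_{\pts_k}(x)$. Your additional check that $\nabla$ lands in $\textit{Pre-PTS}$ (single-valuedness per address) is a small completeness point the paper leaves implicit, not a different argument.
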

\begin{proof}
Suppose that $\nabla((\pts_1,q_1),\ldots,(\pts_n,q_n))(x)=
\{(z_1^\prime,t_1),(z_2^\prime,t_2),\ldots,(z_m^\prime,t_m)\}$. To
show the required we need to show that (a) $ 0\le t_i\le 1$ and (b)
$0\le \Sigma_i t_i\le 1$. Since (b) implies (a), it is enough to
show (b). Suppose that $\forall 1\le i\le n,
\pts_i(x)=\{(z_1^\prime,p_{1i}),(z_2^\prime,p_{2i}),\ldots,(z_m^\prime,p_{mi})\}$,
where $\forall 1\le j\le m,\ p_{ji}=0 $ if $z_j\notin A_{pts_i}(x)$.
Then according to Definition~\ref{nabla} the values $t_1,\ldots,t_m$
can be equivalently calculated by the matrix multiplication of
Figure~\ref{mat}.
\begin{figure*}
    \[
\begin{tabular}{l}
$\begin{array}{cccc}
    \qquad \qquad \pts_1 &\quad
    \pts_2 &\quad\ \ \ \ \  \ldots &\quad \ \pts_n\\
  \end{array}$\\
      $\qquad\begin{array}{c}
    z^\prime_1 \\
    z^\prime_2 \\
    \vdots \\
    z^\prime_m \\
  \end{array}
\left[
  \begin{array}{c c c c}
    p_{11}& p_{12} & \ldots & p_{1n} \\
    \  p_{21}  \ &\ \ \ \ \ p_{22}\ \ \ \ \
    &\ \ \ \ \ \ldots\ \ \ \ \  & p_{2n} \\
    \vdots & \vdots & \ddots & \vdots \\
    p_{m1} & p_{m2} & \ldots & p_{mn} \\
  \end{array}
\right]\left(
  \begin{array}{c}
    q_1 \\
    q_2 \\
    \vdots \\
    q_n \\
  \end{array}
\right)=\left(
  \begin{array}{c}
    t_1 \\
    t_2 \\
    \vdots \\
    t_m \\
  \end{array}
\right)$
   \end{tabular}
\]\caption{A matrix multiplication needed in the proof of Lemma~\ref{lem3}.}
\label{mat}
\end{figure*}
Then \begin{eqnarray*}
  \Sigma_i\ t_i &=& (\Sigma_i\ q_i\times p_{1i})+(\Sigma_i\ q_i\times
p_{2i})+\ldots +\\ & & (\Sigma_i\ q_i\times p_{in})  \\
   &=& (q_1\times \Sigma_i\ p_{i1})+(q_2\times \Sigma_i\ p_{i2})
   +\ldots+\\ & & (q_n\times \Sigma_i\ p_{in}).
\end{eqnarray*}
We note that $\forall j,\ 0\le \Sigma_i\ p_{ij}\le 1$ by definition
of $\pts_j$ and $\forall j,\ 0\le q_j\le 1$. Therefore this last
summation is less than $1$.
\end{proof}

\begin{lemma}
Suppose that $A=\{\pts_1,\ldots,\pts_n\}\subseteq \textit{PTS}$ and
$\textit{pts}=\nabla((\pts_1,\frac{1}{n}),\ldots,(\pts_n,\frac{1}{n}))$.
Then with respect to definitions of $\nabla$, the subtyping, and
equality relations introduced in
Definitions~\ref{order}.2,~\ref{order}.3, and~\ref{nabla},
respectively, the set \textit{PTS} is a complete lattice where $\vee
A=pts$.
\end{lemma}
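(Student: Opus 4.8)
The plan is to reduce the whole statement to an order-isomorphism with a product of power-set lattices, exploiting the fact that both $\le$ and $\equiv$ (Definitions~\ref{order}.2 and~\ref{order}.3) are phrased purely in terms of the supports $A_\pts(x)$ and never refer to the actual probabilities attached to addresses. Consequently $\le$ is only a preorder on \textit{PTS}, and the genuine partial order lives on the quotient $\textit{PTS}/{\equiv}$; I would make this explicit first and work throughout with $\equiv$-classes. Since \textit{Var} is assumed finite, \textit{Addrs} is finite, so each $2^{\textit{Addrs}}$ is a finite (hence complete) power-set lattice and the product $\prod_{x\in\textit{Var}} 2^{\textit{Addrs}}$ ordered by component-wise inclusion is a complete lattice as well.

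First I would define the map $\Phi: \textit{PTS}/{\equiv} \to \prod_{x\in\textit{Var}} 2^{\textit{Addrs}}$ by $\Phi([\pts]) = (A_\pts(x))_{x\in\textit{Var}}$ and check the four properties that make it an order-isomorphism: (i) well-definedness, which is immediate since $A_\pts$ depends only on the $\equiv$-class; (ii) order-preservation and order-reflection, both of which are literally Definition~\ref{order}.2; (iii) injectivity, which is exactly Definition~\ref{order}.3; and (iv) surjectivity, the only step requiring a construction. For surjectivity, given any family $(S_x)_{x}$ of subsets of \textit{Addrs} I would exhibit the witness $\pts(x)=\{(z',1/|S_x|)\mid z'\in S_x\}$ (and $\pts(x)=\emptyset$ when $S_x=\emptyset$); this satisfies $\sum_\pts x\le 1$, so $\pts\in\textit{PTS}$, and $A_\pts(x)=S_x$. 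Transporting the complete-lattice structure along $\Phi$ then shows that $\textit{PTS}/{\equiv}$ is a complete lattice, and since its carrier is finite every subset join is a finite one.

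It remains to identify the supremum with $\nabla$. In the product lattice the join of the classes $\Phi([\pts_1]),\ldots,\Phi([\pts_n])$ is the component-wise union $\bigl(\bigcup_{i} A_{\pts_i}(x)\bigr)_{x}$, so I only need to verify that $\pts=\nabla((\pts_1,\frac{1}{n}),\ldots,(\pts_n,\frac{1}{n}))$ lands in this class. By Lemma~\ref{lem3} this $\pts$ already belongs to \textit{PTS} (the weights $\frac{1}{n}$ sum to $1$, so $\nabla$ is defined). For its support, Definition~\ref{nabla} attaches to $z'$ the value $p=\sum_{(z',p_k)\in\pts_k(x)}\frac{1}{n}\,p_k$; since every $p_k\ge 0$, we get $p>0$ exactly when some $p_k>0$, i.e. exactly when $z'\in A_{\pts_i}(x)$ for some $i$. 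Hence $A_\pts(x)=\bigcup_i A_{\pts_i}(x)$, which gives $\pts\equiv\vee A$, i.e. $\vee A=\pts$ as claimed.

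The main obstacle I anticipate is bookkeeping rather than conceptual: being careful that the order collapses the probabilities (so that the ``lattice'' is really the quotient by $\equiv$), and confirming that no cancellation can occur in the support of $\nabla$ — a point that hinges on all probabilities being non-negative and all weights $\frac{1}{n}$ being strictly positive. The surjectivity witness, and the check that it respects the constraint $\sum_\pts x\le 1$, is the one place where the side condition defining \textit{PTS} genuinely has to be used.
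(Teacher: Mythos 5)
Your proof is correct, but it takes a more structural route than the paper's. The paper's own argument is a two-line direct verification: $\pts$ is an upper bound of $A$ because each $A_{\pts_i}(x)\subseteq A_\pts(x)$, and it is the least upper bound because $A_\pts(x)=\bigcup_i A_{\pts_i}(x)$ --- exactly the support identity you establish at the end via the no-cancellation observation. Everything else you do goes beyond what the paper writes down: the paper never addresses the fact that $\le$ of Definition~\ref{order}.2 is only a preorder (so that antisymmetry, and hence ``lattice,'' only makes sense on the quotient by $\equiv$), never proves that arbitrary subsets of \textit{PTS} have suprema (it asserts completeness while exhibiting a join only for the given finite set $A$), and does not construct anything like your order-isomorphism $\Phi$ with $\prod_{x\in\textit{Var}}2^{\textit{Addrs}}$ or the surjectivity witness $\pts(x)=\{(z^\prime,1/|S_x|)\mid z^\prime\in S_x\}$. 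What your approach buys is a genuine proof of the completeness claim and an honest treatment of the preorder/quotient issue; what the paper's buys is brevity, at the cost of leaving both of those points unargued. The one essential step shared by both is the computation $A_{\nabla((\pts_1,1/n),\ldots,(\pts_n,1/n))}(x)=\bigcup_i A_{\pts_i}(x)$, which you justify more carefully than the paper does (strict positivity of the weights $1/n$ plus non-negativity of the $p_k$ rules out cancellation in the support).
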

\begin{proof}
Clearly $\pts$ is an upper bound for $A$. Moreover for every $x$,
$A_\pts(x)=\cup_i A_{\pts_i}(x)$. Therefore $\pts$ is the least
upper bound of $A$.
\end{proof}

The inference rules of our proposed type system for probabilistic
pointer analysis are shown in Figure~\ref{protype}.

\begin{figure*}
{\footnotesize {
\[
\begin{prooftree} \justifies n:
\pts\ra \emptyset \thickness=0.08em
\end{prooftree}
\quad
\begin{prooftree}
\justifies x: \pts\ra \pts(x) \thickness=0.08em
\end{prooftree}
\quad
\begin{prooftree}
\justifies e_1\oplus e_2: \pts\ra\emptyset \thickness=0.08em
\end{prooftree}
\quad
\begin{prooftree}
e:\pts\ra A\justifies x \coloneqq e: \pts\ra \pts[x\mapsto
A]\thickness=0.08em\using{(\coloneqq^{prob})}
\end{prooftree}
\]\[
\begin{prooftree}
\pts(y)=\{(z_1^\prime,p_1),\ldots,(z_n^\prime,p_n)\}\quad \forall
i.\ x\coloneqq z_i: \pts\ra\pts_i\justifies x \coloneqq *y: \pts\ra
\pts[x\mapsto
\nabla((\pts_1,p_1),\ldots,(\pts_n,p_n))(x)]\thickness=0.08em
\using{(\coloneqq *^{prob})}
\end{prooftree}
\quad
\begin{prooftree}
\justifies \textit{skip}: \pts\ra \pts\thickness=0.08em
\end{prooftree}
\]\[
\begin{prooftree}
\pts(x)=\{(z_1^\prime,p_1),\ldots,(z_n^\prime,p_n)\}\quad\forall
z_i^\prime \in A_\pts(x).\ z_i\coloneqq e: \pts\ra\pts_i \justifies
*x \coloneqq e: \pts\ra
\pts[z_i\mapsto\nabla((\pts,1-p_i),(\pts_i,p_i))(z_i) \mid
z_i^\prime \in
A_\pts(x)]\thickness=0.08em\using{(*\coloneqq^{prob})}
\end{prooftree}
\]\[
\begin{prooftree}
\justifies x \coloneqq \& y: \pts\ra \pts[x\mapsto
\{(y^\prime,1)\}]\thickness=0.08em \using{(\coloneqq \&^{prob})}
\end{prooftree}
\quad
\begin{prooftree} S_1: \pts \ra \pts^{\prime\prime}
\quad S_2: \pts^{\prime\prime} \ra \ptsp\justifies S_1;S_2: \pts\ra
\ptsp \thickness=0.08em\using{(\textit{seq}^{prob})}
\end{prooftree}
\]\[
\begin{prooftree}
S_t:\pts\ra \pts_t \quad S_f:\pts\ra \pts_f\justifies \ifs: \pts\ra
\nabla((\pts_t,p),(\pts_f,1-p))\thickness=0.08em\using{(\textit{if}^{prob})}
\end{prooftree}
\]\[
\begin{prooftree}
S_i: \nabla\{(\pts,1/n),(\pts_j,1/n)\mid
j\not=i\}\ra\pts_i\justifies \textit{par}\{\{S_1\},\ldots,\{S_n\}\}:
\pts\ra \nabla((\pts_1,1/n),\ldots,(\pts_n,1/n)) \thickness=0.08em
\using{(\textit{par}^{prob})}
\end{prooftree}
\]\[
\begin{prooftree}
\textit{par}\{\{\mathit{if\ b_1\ then\ }S_1 \mathit{\ else\
skip}\},\ldots,\{\mathit{if\ b_n\ then\ }S_n \mathit{\ else\
skip}\}\}: \pts\ra \ptsp\justifies
\textit{par-if}\{(b_1,S_1),\ldots,(b_n,S_n)\}: \pts\ra
\ptsp\thickness=0.08em \using{(\textit{par-if}^{prob})}
\end{prooftree}
\]\[
\begin{prooftree}
\forall n.\
\textit{par}\{\overbrace{\{S\},\ldots,\{S\}}^{n-times}\}:
\pts\ra\ptsp\justifies \textit{par-for}\{S\}: \pts\ra
\ptsp\thickness=0.08em
\using{(\textit{par-for}^{prob})}\end{prooftree} \quad
\begin{prooftree}
n=0 \justifies \while: \pts\ra \pts
\thickness=0.08em\using{(\textit{whl}^{prob}_1)}
\end{prooftree}\]\[
\begin{prooftree}
n\ge 1\qquad\forall 1\le i\le n.\ S_t: \pts\ra^i \pts_i \justifies
\while: \pts\ra \nabla((\pts_1,1/n),\ldots,(\pts_n,1/n))
\thickness=0.08em\using{(\textit{whl}^{prob}_2)}
\end{prooftree}
\]\[
\begin{prooftree}
\ptsp_1\leq \pts_1\quad S:\pts_1 \ra \pts_2 \quad \pts_2 \leq
\ptsp_2\justifies S: \ptsp_1\ra
\ptsp_2\thickness=0.08em\using{(\textit{csq}^{prob})}
\end{prooftree}
\]
}} \caption{The inference rules for the type system for
probabilistic pointer analysis}\label{protype}
\end{figure*}
The judgment of an arithmetic expression has the form $e:\pts \ra
A$. The intuition (Lemma~\ref{lem2}) of this judgment is that any
address that $e$ evaluates to in a state of type $pts$ is included
in the set $A$ as the second component of a pair whose first
component is a non-zero probability. The judgment for a statement
$S$ has the form ${S:\pts\ra\ptsp}$ and guarantees that if the
execution of $S$ in a state of type $\pts$ terminates then the
reached state is of type $\ptsp$. This is proved in
Theorem~\ref{soundness-points to 2}.

Concerning the inference rules, some comments are in order. In the
rule $(\coloneqq *^{prob})$, since there are $n$ possible ways to
modify $x$, the post-type is calculated from the pre-type by
assigning $x$ its value according to the upper bound of the $n$
ways. The upper bound is consider to enable the analysis to cover
all possible executions of the statement. In the rule
$(*\coloneqq^{prob})$, there are $n$ variables,
$\{z_1,\ldots,z_n\}$, that have a chance of getting modified. This
produces $n$ post-types in the pre conditions of the rule. Therefore
the post-type is calculated from the pre-type by assigning each of
the $n$ variables its image under the upper bound of the $n$
post-types. In the rule $(\textit{if}^{prob}), p$ is the probability
that the condition of the \textit{if} statement is true. The rule
$(\textit{par}^{prob})$ has this form in order for the analysis
result of any thread $S_i$ of the \textit{par} statement to consider
the fact that any other thread may have been executed before the
thread in hand. As it is the case in the operational semantics, the
rules for conditionally spawned threads $(\textit{par-if}^{prob})$
and parallel loops $(\textit{par-for}^{prob})$ are built on the rule
$(\textit{par}^{prob})$. In the following we give an example for the
application of the rule $(\textit{par}^{prob})$. Let:
\begin{itemize}
    \item $S_1: \textit{if}\ b_1\textit{ then } x\coloneqq \&y
    \textit{ else } x\coloneqq 5,$
    \item $S_2:x\coloneqq \&z;$
    \item $S_{\textit{par}}:\textit{par}\{\{S_1\},\{S_2\}\},$
    \item $\pts=\{t\mapsto\emptyset\mid t\in\textit{Var}\},$ $
    {\pts_1=\{x\mapsto\{(y^\prime,0.4)\},t\mapsto\emptyset
    \mid x\not=t\in\textit{Var}\}},$
    and $\pts_2=\{x\mapsto\{(z^\prime,1)\},t\mapsto\emptyset\mid
x\not=t\in\textit{Var}\}$.
\end{itemize} We suppose that the condition $b_1$ in $S_1$
succeeds with probability $0.4$. Then we have the following:
\begin{itemize}
    \item $\nabla((\pts,1/2),(\pts_1,1/2))={\{x\mapsto\{(y^\prime,0.25)\},
    t\mapsto\emptyset\mid x\not=t\in\textit{Var}\}}$,
    \item $\nabla((\pts,1/2),(\pts_2,1/2))={\{x\mapsto\{(z^\prime,0.5)\},
    t\mapsto\emptyset\mid x\not=t\in\textit{Var}\}}$, and
    \item $\nabla((\pts_1,1/2),(\pts_2,1/2))={\{x\mapsto\{(y^\prime,0.25),
    (z^\prime,0.5)\}, t\mapsto\emptyset\mid x\not=t\in\textit{Var}\}}$.
\end{itemize}
Clearly, ${S_1:\nabla((\pts,1/2),(\pts_2,1/2))\ra \pts_1}$ and
$S_2:\nabla((\pts,1/2),(\pts_1,1/2))\ra \pts_2$. These two judgments
constitute the hypotheses for the rule $(\textit{par}^{prob})$.
Therefore using the rule $(\textit{par}^{prob})$, we can conclude
that ${S_{\textit{par}}:\pts\ra\nabla((\pts_1,1/2),(\pts_2,1/2))}$.
The post type of $S_{\textit{par}}$ clearly covers all semantics
states that can be reached by executing $S_{\textit{par}}$. Now we
give an example for the application of the rule
$(\textit{par-if}^{prob})$. Let:
\begin{itemize}
    \item $S_1: x\coloneqq \&y,$
    \item $S_2:x\coloneqq \&z,$
    \item $S_{\textit{par-if}}:\textit{par-if}\{(b_1,S_1),(\textit{true},S_2)\},$ and
    \item $\pts^\prime=\{x\mapsto\{(y^\prime,0.25), (z^\prime,0.5)\}
    ,t\mapsto\emptyset\mid {x\not=t\in\textit{Var}}\}$ and
    $\pts=\{t\mapsto\emptyset\mid t\in\textit{Var}\}$.
\end{itemize} We suppose that the condition $b_1$
succeeds with probability $0.4$. By the previous example it should
be clear that $\textit{par}\{\{\textit{if}\ b_1\ \textit{then } S_1
\textit{ else skip}\},\{\textit{if } \textit{true}\ \textit{then }
S_2 \textit{ else skip}\}\}:\pts\ra\ptsp$. This last judgment
constitutes the hypothesis for the rule $(\textit{par-if}^{prob})$.
Therefore using the rule $(\textit{par-if}^{prob})$, we can conclude
that $S_{\textit{par-if}}:\pts\ra\ptsp$. The post type of
$S_{\textit{par-if}}$ clearly covers all semantics states that can
be reached by executing $S_{\textit{par-if}}$. In rules
$(\textit{whl}^{prob}_1)$ and $(\textit{whl}^{prob}_2), n$
represents an upper bound for the trip-count of the loop. The
post-type of $(\textit{whl}^{prob}_2)$ is an upper bound for
post-types resulting for all number of iterations bounded by $n$.

The proof of the following lemma is straightforward.
\begin{lemma}\label{lem2}
\begin{enumerate}
    \item $\pts\le \ptsp\Ra (\forall (\gamma,p). $ ${(\gamma,p)\models \pts \Ra
    (\gamma,p)\models \ptsp)}$
    \item Suppose $e:\pts \ra A$ and $(\gamma,p)\models \pts$.
    Then $\lb e\rb\gamma \in \textit{Addrs}$ implies
    $(\lb e\rb\gamma,q) \in A$, for some $q>0$.
\end{enumerate}
\end{lemma}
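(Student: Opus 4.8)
The statement to prove is Lemma~\ref{lem2}, which has two parts. Let me sketch how I would prove each.

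For part~1, the claim is monotonicity of the satisfaction relation $\models$ under the subtyping order $\le$. The plan is to unfold the definitions directly. Fix $(\gamma,p)$ and suppose $(\gamma,p)\models\pts$ with $\pts\le\ptsp$. I would take an arbitrary $x$ with $\gamma(x)\in\textit{Addrs}$ and chase the implications: by Definition~\ref{order}.4 there exists $q>0$ with $(\gamma(x),q)\in\pts(x)$, which means $\gamma(x)\in A_\pts(x)$ by Definition~\ref{def2}.4. The subtyping hypothesis (Definition~\ref{order}.2) gives $A_\pts(x)\subseteq A_{\ptsp}(x)$, so $\gamma(x)\in A_{\ptsp}(x)$, and unfolding $A_{\ptsp}(x)$ again yields some $q'>0$ with $(\gamma(x),q')\in\ptsp(x)$. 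That is exactly $(\gamma,p)\models\ptsp$. This part is entirely a definitional chase through Definitions~\ref{def2}.4,~\ref{order}.2, and~\ref{order}.4.

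For part~2, I would proceed by structural induction on the arithmetic expression $e$, matching the three typing rules for expressions in Figure~\ref{protype} and the corresponding clauses of the semantics in Figure~\ref{sem}. The base case $e=n$ gives $A=\emptyset$ and $\lb n\rb\gamma=n\in\mathbb{Z}$, so the hypothesis $\lb e\rb\gamma\in\textit{Addrs}$ is vacuously false and the claim holds. The case $e=e_1\oplus e_2$ also gives $A=\emptyset$, but here the crucial observation is that $\lb e_1\oplus e_2\rb\gamma$ is either an integer or $!$ by its semantic clause, hence never in $\textit{Addrs}$; again the antecedent fails and the implication is trivially true. The substantive case is $e=x$: here the rule gives $A=\pts(x)$ and $\lb x\rb\gamma=\gamma(x)$, so if $\gamma(x)\in\textit{Addrs}$ then the satisfaction hypothesis $(\gamma,p)\models\pts$ supplies directly some $q>0$ with $(\gamma(x),q)\in\pts(x)=A$, as required.

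The proof is genuinely short, as the paper states. The only point that deserves care, and the one I would flag as the main subtlety rather than a real obstacle, is the case $e=e_1\oplus e_2$: one must notice that the lemma is stated for all arithmetic expressions, so I cannot simply appeal to the inductive hypothesis on $e_1,e_2$ (their images need not constrain $A$, which is empty). Instead the argument must rely on the semantic fact that arithmetic operators never produce an address, so the antecedent $\lb e\rb\gamma\in\textit{Addrs}$ is never satisfied. Once this is observed, no induction is even strictly needed, since the three syntactic forms of $e$ can be handled by direct case analysis against their semantic clauses. The assignment and pointer statements, which do manipulate addresses nontrivially, are handled later in Theorem~\ref{soundness-points to 2} and are outside the scope of this lemma.
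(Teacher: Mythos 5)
Your proof is correct and is precisely the ``straightforward'' argument the paper alludes to but omits: part~1 is the definitional chase through $A_\pts(x)\subseteq A_{\ptsp}(x)$, and part~2 is the three-way case analysis on $e$ in which only the variable case is non-vacuous. Your observation that the $e_1\oplus e_2$ case rests on the semantic clause forcing $\lb e_1\oplus e_2\rb\gamma\in\mathbb{Z}\cup\{!\}$ (rather than on any induction hypothesis) is the right way to discharge it.
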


Lemma~\ref{lem2}.1 formalizes the soundness of points-to types.
Lemma~\ref{lem2}.2 shows that for a certain state that is of a
certain type, if the evaluation of an expression with respect to the
state is an address, then this evaluation is surely (positive
probability) approximated by the evaluation of the expression with
respect to the type.

The following theorem proves the soundness of the type system. The
meant soundness implies that the type system respects the
operational semantics with respect to the relation $\models$ whose
definition is based on probabilities.
\begin{theorem} \( (Soundness)\)\label{soundness-points to 2}
Suppose that ${S:\pts\ra \ptsp},$ $ {S:(\gamma,p)\rightsquigarrow
(\gamma^\prime,p^\prime)}$, and ${(\gamma,p)}\models {pts}$. Then
${(\gamma^\prime,p^\prime)} \models {\ptsp}$.
\end{theorem}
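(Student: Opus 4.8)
The plan is to argue by rule induction on the type derivation $S:\pts\ra\ptsp$, doing a case analysis on the last rule and, in each case, inverting the operational derivation $S:(\gamma,p)\rightsquigarrow(\gamma^\prime,p^\prime)$ to expose the matching semantic rule (the abort premises never apply, since the hypothesis fixes a non-abort target state, and the same reasoning keeps all intermediate states proper). Two observations drive every case. First, the relation $\models$ of Definition~\ref{order}.4 constrains only the store $\gamma$ and the address sets $A_\pts$, never the probability component; hence every rescaling performed by the semantics (the factors $p_{if}$, $1-p_{if}$ and $\frac{1}{n!}$) is irrelevant to the conclusion. Second, Lemma~\ref{lem2}.1 lets me enlarge a post-type freely, and Lemma~\ref{lem2}.2 transfers ``evaluates to an address'' from the store to the type. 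The base cases are then routine: for $(\coloneqq^{prob})$ and $(\coloneqq\&^{prob})$ I check the rewritten variable with Lemma~\ref{lem2}.2 and leave the others untouched; for $(\coloneqq *^{prob})$ I use $(\gamma,p)\models\pts$ to locate $\gamma(y)$ among the $z_i^\prime$ and observe $A_{\pts_i}(x)=A_\pts(z_i)$; for $(*\coloneqq^{prob})$ I note that the semantics overwrites exactly one target $z_i$ (the one with $\gamma(x)=z_i^\prime$), whereas the post-type, through a $\nabla$ that keeps $\pts$ as one argument, covers simultaneously the single modified target and all the untouched ones.

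For the compositional constructs the induction hypothesis does the work. Sequencing $(\textit{seq}^{prob})$ threads the IH through the intermediate state; the conditional $(\textit{if}^{prob})$ applies the IH to whichever branch the semantics selected and then invokes Lemma~\ref{lem2}.1, since $\pts_t,\pts_f\le\nabla((\pts_t,p),(\pts_f,1-p))$ at the level of address sets. The consequence rule $(\textit{csq}^{prob})$ is handled by sandwiching the IH between two uses of Lemma~\ref{lem2}.1 ($\ptsp_1\le\pts_1$ before, $\pts_2\le\ptsp_2$ after); this is precisely why the induction is taken over the type derivation rather than over the syntax of $S$.

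The two genuinely interesting cases are the loop and the \textit{par} command, and I expect the latter to be the main obstacle. Writing $\pts^{\mathit{mix}}_i=\nabla\{(\pts,1/n),(\pts_j,1/n)\mid j\neq i\}$ for the antecedent pre-type of thread $S_i$, I will invert $(\textit{par-sem})$ (and the sequencing rule repeatedly) to obtain a permutation $\theta$ and states $(\gamma_0,p_0)=(\gamma,p),\ldots,(\gamma_n,p_n)$ with $S_{\theta(k)}:(\gamma_{k-1},p_{k-1})\rightsquigarrow(\gamma_k,p_k)$ and $\gamma_n=\gamma^\prime$, then run an inner induction on the position $k$ establishing the invariant $(\gamma_k,p_k)\models\pts_{\theta(k)}$. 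The step works because $\theta$ is a permutation: the previously reached type $\pts_{\theta(k-1)}$ (or $\pts$ when $k=1$) has at every variable an address set contained in $A_\pts(x)\cup\bigcup_{j\neq\theta(k)}A_{\pts_j}(x)=A_{\pts^{\mathit{mix}}_{\theta(k)}}(x)$, so Lemma~\ref{lem2}.1 yields $(\gamma_{k-1},p_{k-1})\models\pts^{\mathit{mix}}_{\theta(k)}$ and the thread IH gives $(\gamma_k,p_k)\models\pts_{\theta(k)}$; at $k=n$ we finish via $\pts_{\theta(n)}\le\nabla((\pts_1,1/n),\ldots,(\pts_n,1/n))$. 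The delicate point is exactly this ``absorption'' argument, namely that the mixed pre-types are engineered so that every thread's post-type is dominated by every other thread's pre-type, which is what makes the result independent of the interleaving order. The rules $(\textit{par-if}^{prob})$ and $(\textit{par-for}^{prob})$ then follow at once, since both the semantics and the type system define them through the \textit{par} construct.

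Finally, for the loop I split on the rule used. $(\textit{whl}^{prob}_1)$ forces $n=0$, so the trip count is $0$, the semantics takes the $b$-false rule with $\gamma^\prime=\gamma$, and the post-type $\pts$ works verbatim. For $(\textit{whl}^{prob}_2)$ I read the premises as a chain $\pts=\pts_{(0)},\pts_{(1)}=\pts_1,\ldots,\pts_n$ with $S_t:\pts_{(i-1)}\ra\pts_i$, let $k$ be the number of iterations the semantics actually performs, and prove by an inner induction on $k$ (using the IH for the body at each step) that the state after $k$ iterations satisfies $\pts_k$; the loop then exits with $b$ false, so the final state is that state and $\pts_k\le\nabla((\pts_1,1/n),\ldots,(\pts_n,1/n))$ closes the case through Lemma~\ref{lem2}.1. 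The subtle bookkeeping here is matching the semantic iteration count $k$ to one of the indexed types, which requires $1\le k\le n$; the degenerate case $k=0$ belongs to $(\textit{whl}^{prob}_1)$ rather than $(\textit{whl}^{prob}_2)$, and it is exactly this case split where the modelling assumption that $n$ faithfully bounds the trip count is invoked.
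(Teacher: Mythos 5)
Your proposal is correct and follows essentially the same route as the paper: an induction on the type derivation using Lemma~\ref{lem2} throughout, with the \textit{par} case resolved exactly as in the paper by chaining the interleaved states through the mixed pre-types $\nabla\{(\pts,1/n),(\pts_j,1/n)\mid j\neq i\}$ and absorbing each thread's post-type into the next thread's pre-type, and the loop case handled by matching the actual trip count $m\le n$ against the indexed types and closing with the subtyping into $\nabla((\pts_1,1/n),\ldots,(\pts_n,1/n))$. The only difference is one of thoroughness: you spell out the sequencing, conditional, and consequence cases and the inner inductions that the paper leaves implicit.
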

\begin{proof}
A structure induction on type derivation can be used to complete the
proof of this theorem. Some cases are presented below.
\begin{itemize}
\item The case of $(\coloneqq^{prob})$: in this case
$p^\prime=p,\ \ptsp=pts[x\mapsto A]$, and
$\gamma^\prime=\gamma[x\mapsto \lb e\rb\gamma]$. Hence by
Lemma~\ref{lem2}.2, $\gamma \models (\pts,p)$ implies $\gamma^\prime
\models (\ptsp,p^\prime)$.
\item The case of $(\coloneqq*^{prob})$: in this case for some
$z\in \textit{Var},\ \gamma(y)=z^\prime$ and $x \coloneqq z:
(\gamma,p)\rightsquigarrow (\gamma^\prime,p)$. For some $i,\
z^\prime=z_i^\prime$ since $(\gamma,p) \models \pts$. Hence by
assumption $x\coloneqq z_i: \pts\ra\pts_i$. Therefore by soundness
of $(\coloneqq^{prob})$, $(\gamma^\prime,p)\models \pts_i\le
\ptsp=\pts[x\mapsto \nabla((\pts_1,p_1),\ldots,(\pts_n,p_n))(x)]$.
\item The case of $(*\coloneqq^{prob})$: in this case there exists
$z\in \textit{Var}$ such that $\gamma(x)=z^\prime$ and $z\coloneqq
e:(\gamma,p)\rightsquigarrow (\gamma^\prime,p)$. For some $i,\
z^\prime=z_i^\prime$ since $(\gamma,p) \models \pts$. Hence by
assumption $z_i\coloneqq e: \pts\ra\pts_i$. Therefore by soundness
of $(\coloneqq^{prob})$, $(\gamma^\prime,p)\models \pts_i\le
\ptsp=\pts[z_i\mapsto\nabla((\pts,1-p_i),(\pts_i,p_i))(z_i) \mid
z_i^\prime \in A_\pts(x)]$.
\item The case of $(par^{prob})$: in this case there exist a permutation
$\theta:\{1,\ldots,n\}\ra\{1,\ldots,n\}$ and $n+1$ states
$(\gamma_1,p_1),\ldots,(\gamma_{n+1},p_{n+1})$ such that
$(\gamma,p)=(\gamma_1,p_1),
(\gamma^\prime,p^\prime)={(\gamma_{n+1},\frac{1}{n!}\times
p^\prime_{n+1})}$, and for every ${1\le i\le n},\
{S_{\theta(i)}:(\gamma_i,p_i)\ra (\gamma_{i+1},p_{i+1})}$. Also
${(\gamma_1,p_1)\models \pts \le \nabla\{(\pts,1/n),(\pts_j,1/n)\mid
j\not=1\}}$. Therefore by the induction hypothesis
$(\gamma_2,p_2)\models \pts_1\le
{\nabla\{(\pts,1/n),(\pts_j,1/n)\mid j\not=2\}}$. Again by the
induction hypothesis we get $(\gamma_3,p_3)\models \pts_2$.
Therefore by a simple induction on $n$, we can show that
$(\gamma_{n+1},p_{n+1})\models \pts_n\le
\nabla((\pts_1,1/n),\ldots,(\pts_n,1/n))=\ptsp$. This implies
$(\gamma^\prime,p^\prime)=(\gamma_{n+1},\frac{1}{n!}\times
p^\prime_{n+1})\models\ptsp$
\item The case of $(par-for^{prob})$: in this case there exists $n$ such that $
{\textit{par}\{\overbrace{\{S\},\ldots,\{S\}}^{n-times}\}:
(\gamma,p)\rightsquigarrow(\gamma^\prime,p^\prime)}$. By induction
hypothesis we have ${
\textit{par}\{\overbrace{\{S\},\ldots,\{S\}}^{n-times}\}:\pts\ra
\ptsp}$. Therefore by the soundness of $(par^{prob})$,
$(\gamma^\prime,p^\prime)\models \ptsp$.
\item The case of $(\textit{whl}^{prob}_2)$: in this case there
exist $m\le n$ and $m+1$ states, $(\gamma_1,p_1),\ldots,
(\gamma_{m+1},p_{m+1})$, such that $(\gamma,p)=(\gamma_1,p_1),\
(\gamma^\prime,p^\prime)=(\gamma_{m+1},p_{m+1}),\ $ and $\forall
1\le i\le m.\ S:
(\gamma_i,p_i)\rightsquigarrow(\gamma_{i+1},p_{i+1})$. By induction
hypothesis we have $(\gamma^\prime,p^\prime)\models \pts_m \le
\nabla((\pts_1,1/n),\ldots,(\pts_n,1/n))$. Therefore
$(\gamma^\prime,p^\prime)\models \ptsp$ as required.
\end{itemize}
\end{proof}

We note that probabilities are mentioned implicitly in
Theorem~\ref{soundness-points to 2}. This is in the condition that
${(\gamma,p)}\models {pts}$. Some of the implications of this
implicit consideration of probabilities are explicit in
Lemma~\ref{lem2}.2. As an example for the theorem, executing the
statement $S_{\textit{par}}$, defined above, from the semantics
state $\gamma=\{t\mapsto 0\mid t\in \textit{Var}\}$ may result in
the state $\gamma^\prime=\{t\mapsto 0,x\mapsto z^\prime\mid
x\not=t\in \textit{Var}\}$. This happens if $S_2$ is executed after
$S_1$. Clearly we have that $\gamma\models \{t\mapsto\emptyset\mid
t\in\textit{Var}\}$ and $\gamma^\prime\models
\{x\mapsto\{(y^\prime,0.25),
    (z^\prime,0.5)\}, t\mapsto\emptyset\mid x\not=t\in\textit{Var}\}$.

One source of attraction in the use of type systems for program
analysis is the relative simplicity of the inference rules. This
simplicity is very important when practical implementation is
concerned. The simplicity of the rules naturally simplifies
implementations of rules and hence the type system. In particular,
from experience related to coding similar type systems, we believe
that the implementation of the type system presented in this paper
is straightforward and efficient in terms of space and time.
\section{Related work}\label{rwork}
\subsubsection{Analysis of multithreaded programs:}
Typically, analyses of multithreaded programs are classified into
two main categories: (a) techniques that were originally designed
for sequential programs and later extended to analyze multithreaded
programs and (b) techniques that were designed specifically for
analyzing, optimizing, or correcting multithreaded programs.

The first category includes flow-insensitive approaches providing an
easy way to analysis multithreaded programs. This is done via
considering all possible combinations of statements used in a
parallel structure. The drawback of this approach is that it is not
practical enough due to huge number of combinations. However
flow-sensitive approaches of sequential programs were also extended
to cover multithreaded programs. Examples of these techniques are
constant propagation~\cite{Lee97}, code motion~\cite{Knoop99}, and
reaching definitions~\cite{Sarkar09}.

The category of techniques that were designed specifically for
multithreaded programs include deadlock detection, data race
detection, and weak memory consistency. A round abeyance to gain
resources usually results in a deadlock
situation~\cite{Kim09,Wang08,Xiao10}. Synchronization analysis is a
typical start to study deadlock detection for multithreaded
programs. In absence of synchronization, if two parallel threads
write to the same memory location, a situation of a data
race~\cite{Leung09} results. Data race analyses aim at eliminating
data race situations as they are mainly programmer error. Models of
weak memory consistency~\cite{Gelado10} aims at improving
performance of hardware. This improvement usually results in
complicating parallel programs construction and analysis.
\subsubsection{Probabilistic pointer analysis and speculative optimizations:}
Although pointer analysis is a well-established program analysis and
many techniques have been suggested, there is no single technique
that is believed to be the best choice~\cite{Hind00}. The trade-off
between accuracy and time-costs hinders a universal pointer analysis
and motivates application-directed techniques for pointer
analysis~\cite{Hind01}. A probabilistic pointer analysis that is
flow-sensitive and context-insensitive is presented in~\cite{Sun11}
for Java programs. While our work is based on type systems, the work
in~\cite{Sun11} is based on interprocedural control flow graphs
(ICFG) whose edges are enriched with probabilities. While our work
treats multhithreaded programs, the work in~\cite{Sun11} treats only
sequential programs. Context-sensitive and control-flow-sensitive
pointer analyses~\cite{Hardekopf09,Wang09,Yu10,El-Zawawy11} are
known to be accurate but not scalable. On the other hand the
context-insensitive control-flow-insensitive
techniques~\cite{Adams02,Anderson02} are scalable but excessively
conservative. A convenient mixture of accuracy and scalability is
introduced by some technique~\cite{Berndl03,Whaley04,Zhu04} to
optimize the trade-off mentioned above. The probabilistic pointer
analysis of a simple imperative language and the pointer analysis of
multithreaded programs were studied in~\cite{Chen04,Silva06}
and~\cite{Rugina03}, respectively. However none of these typical
techniques for pointer analysis study the probabilistic pointer
analysis of multithreaded programs.

Speculative
optimizations~\cite{Ramalingam96,Ju99,Fernandez02,Bhowmik03} are
considered by many program analyses. A probabilistic technique for
memory disambiguation was proposed in~\cite{Ju99}. This technique
measures the probability that two array references alias.
Nevertheless this approach is not convenient to pointers. By
lessening the safety of analysis, the work in~\cite{Fernandez02}
introduces a pointer analysis that considers speculation. Another
unsafe analysis, which achieves scalability using transfer
functions, is proposed in~\cite{Bhowmik03}. The problem with these
last two approaches is that they do not compute the probability
information required by speculative optimizations.

\subsubsection{Type systems in program analysis:}
There are general
algorithms~\cite{Laud06,Benton04,Nielson02,Saabas08,El-Zawawy11,Nicola10}
for using type systems to present dataflow analyses, which are
monotone and forward or backward. While a
way~\cite{Saabas08,Benton04} to reason about program pairs using
relational Hoare logic exists, program
optimizations~\cite{Saabas08,Nielson02} as types systems also exist.
Type systems were also used to cast safety policies for resource
usage, information flow, and carrying-code
abstraction~\cite{Beringer04,Besson06}. Proving the soundness of
compiler optimizations for imperative languages, using type systems,
gained much interest~\cite{Bertot04,Laud06,Saabas08} of many
researchers. Other work studies translating proofs of functional
correctness using wp-calculus~\cite{Barthe09} and using a Hoare
logic~\cite{Saabas08}. There are other
optimizations~\cite{Aspinall07} that boost program quality besides
maintaining program semantics.
\subsubsection{Edge and path profiling:}
Edge (path) profiling research simply aims at profiling programs
edges (paths). The profiling process can be done statically or
dynamically. Profiling techniques can be classified into:
\begin{itemize}
    \item Sample-based techniques~\cite{Anderson97,Suganuma05} which
profile representative parts of active edges and paths,
    \item One-time profiling methods which profile only part of the execution of the
program to cut down the overhead~\cite{Suganuma05,Zilles02},
    \item Instrumentation-based techniques~\cite{Joshi04} which are more
convenient for programs with comparably anticipated behavior, and
    \item Hardware profiling which employs hardware to gather edge profiles
    using existing hardware for branch anticipation~\cite{Vaswani05}.
\end{itemize}
Using a parallel data-flow diagram~\cite{Grunwald93}, many of these
techniques are applicable to the language studied in this paper. In
particular the technique presented in~\cite{Bond05}, a hybrid
sampling and instrumentation approach, is a convenient choice giving
its simplicity and powerful.

\section{Acknowledgments} This work was started during the author's
sabbatical at Institute of Cybernetics, Estonia in the year 2009.
The author is grateful to T. Uustalu for fruitful discussions. This
work was partially supported by the EU FP6 IST project MOBIUS. The
author is also indebted to the anonymous reviewers whose queries and
comments improved the paper.


\begin{thebibliography}{}
\expandafter\ifx\csname url\endcsname\relax
  \def\url#1{{\tt #1}}\fi
\expandafter\ifx\csname urlprefix\endcsname\relax\def\urlprefix{URL }\fi

\end{thebibliography}


\begin{thebibliography}{13}
\bibitem{Gelado10} Gelado I, Cabezas J, Navarro N, Stone JE,
Patel SJ, Hwu WW (2010) An asymmetric distributed shared memory
model for heterogeneous parallel systems. In: Hoe JC, Adve VS (eds)
\textit{ASPLOS, ACM}, pp 347–-358.

\bibitem{Leung09} Leung K, Huang Z, Huang Q, Werstein P (2009) Maotai 2.0:
Data race prevention in view-oriented parallel programming. In:
\textit{PDCAT, IEEE Computer Society}, pp 263–-271.

\bibitem{Xiao10} Xiao X, Lee JJ (2010) A true o(1) parallel deadlock detection
algorithm for single-unit resource systems and its hardware
implementation. \textit{IEEE Trans Parallel Distrib Syst}
\textbf{21}, pp 4–-19.

\bibitem{El-Zawawy11} El-Zawawy MA (2011) Program optimization based pointer
analysis and live stack-heap analysis. \textit{International Journal
of Computer Science Issues} \textbf{8(2)}, pp 98--107.

\bibitem{El-Zawawy11-2} El-Zawawy MA (2011) Flow sensitive-insensitive pointer analysis
based memory safety for multithreaded programs. In: Murgante B,
Gervasi O, Iglesias A, Taniar D, Apduhan BO (eds) \textit{ICCSA (5),
Springer}, \textbf{vol 6786} of \textit{Lecture Notes in Computer
Science}, pp 355–-369.

\bibitem{Adams02} Adams S, Ball T, Das M, Lerner S, Rajamani SK, Seigle M, Weimer W
(2002) Speeding up dataflow analysis using flow-insensitive pointer
analysis. In: Hermenegildo MV, Puebla G (eds) \textit{SAS,
Springer}, \textbf{vol 2477} of \textit{Lecture Notes in Computer
Science}, pp 230–-246.

\bibitem{Berndl03} Berndl M, Lhot{\'a}k O, Qian F, Hendren LJ, Umanee N (2003)
Points-to analysis using bdds. In: \textit{PLDI, ACM}, pp 103–-114.

\bibitem{Chen04} Chen PS, Hwang YS, Ju RDC, Lee JK (2004) Interprocedural
probabilistic pointer analysis. \textit{IEEE Trans Parallel Distrib
Syst} \textbf{15}, pp 893–-907.

\bibitem{Silva06} Silva JD, Steffan JG (2006) Aprobabilistic pointer analysis for
speculative optimizations. In: Shen JP, Martonosi M (eds)
\textit{ASPLOS, ACM}, pp 416–-425.

\bibitem{Yu10}Yu H, Xue J, Huo W, Feng X, Zhang Z (2010) Level by level: making
flow- and context-sensitive pointer analysis scalable for millions
of lines of code. In: Moshovos A, Steffan JG, Hazelwood KM, Kaeli DR
(eds) \textit{CGO, ACM}, pp 218–-229.

\bibitem{Anderson02} Anderson P, Binkley D, Rosay G, Teitelbaum T (2002) Flow
insensitive points-to sets. \textit{Information {\&} Software
Technology} \textbf{44}, pp 743–-754.

\bibitem{Bertot04} Bertot Y, Gr{\'e}goire B, Leroy X (2004) A structured approach to
proving compiler optimizations based on dataflow analysis. In:
Filliatre JC, Paulin Mohring C, Werner B (eds) \textit{TYPES,
Springer}, \textbf{vol 3839} of Lecture Notes in Computer Science,
pp 66–-81.

\bibitem{Laud06} Laud P, Uustalu T, Vene V (2006) Type systems equivalent to
data-flow analyses for imperative languages. \textit{Theor Comput
Sci} \textbf{364}, pp 292–-310.

\bibitem{Saabas08} Saabas A, Uustalu T (2008) Programand proof optimizations with
type systems. \textit{Journal of Logic and Algebraic Programming}
\textbf{77}, pp 131 -– 154, the 16th NordicWorkshop on the
Prgramming Theory (NWPT 2006).

\bibitem{Bond05} Bond MD, McKinley KS (2005) Continuous path and edge profiling.
In: \textit{MICRO, IEEE Computer Society}, pp 130–-140.

\bibitem{Anderson97} Anderson JAM, Berc LM, Dean J, Ghemawat S, Henzinger MR, Leung
ST, Sites RL, Vandevoorde MT, et al (1997) Continuous profiling:
Where have all the cycles gone? \textit{ACM Trans Comput Syst}
\textbf{15}, pp 357-–390.

\bibitem{Suganuma05} Suganuma T, Yasue T, Kawahito M, Komatsu H, Nakatani T (2005)
Design and evaluation of dynamic optimizations for a java
just-in-time compiler. \textit{ACM Trans Program Lang Syst}
\textbf{27}, pp 732-–785.

\bibitem{Vaswani05} Vaswani K, Thazhuthaveetil MJ, Srikant YN (2005) A programmable
hardware path profiler. In: \textit{CGO, IEEE Computer Society}, pp
217–-228.

\bibitem{Steffan05} Steffan JG, Colohan CB, Zhai A,Mowry TC (2005) The stampede
approach to thread-level speculation. \textit{ACM Trans Comput Syst}
\textbf{23}, pp 253–-300.

\bibitem{Lee97} Lee J, Midkiff SP, Padua DA (1997) Concurrent static single
assignment form and constant propagation for explicitly parallel
programs. In: Li Z, Yew PC, Chatterjee S, Huang CH, Sadayappan P,
Sehr DC (eds) \textit{LCPC, Springer}, \textbf{vol 1366} of
\textit{Lecture Notes in Computer Science}, pp 114–-130.

\bibitem{Knoop99} Knoop J, Steffen B (1999) Code motion for explicitly parallel
programs. In: \textit{PPOPP}, pp 13-–24.

\bibitem{Sarkar09} Sarkar V (2009) Challenges in code optimization of parallel
programs. In: deMoor O, SchwartzbachMI (eds) \textit{CC, Springer},
\textbf{vol 5501} of \textit{Lecture Notes in Computer Science}, pp
1.

\bibitem{Kim09} Kim BC, Jun SW, Hwang DJ, Jun YK (2009) Visualizing potential
deadlocks in multithreaded programs. In: Malyshkin V (ed)
\textit{PaCT, Springer}, \textbf{vol 5698} of \textit{Lecture Notes
in Computer Science}, pp 321–-330.

\bibitem{Wang08} Wang Y, Kelly T, Kudlur M, Lafortune S, Mahlke SA (2008) Gadara:
Dynamic deadlock avoidance for multithreaded programs. In: Draves R,
van Renesse R (eds) \textit{OSDI, USENIX Association}, pp 281–-294.

\bibitem{Hind00} Hind M, Pioli A (2000) Which pointer analysis should I use? In:
\textit{ISSTA}, pp 113–-123.

\bibitem{Hind01} Hind M (2001) Pointer analysis: haven't we solved this problem
yet? In: \textit{PASTE}, pp 54–-61.

\bibitem{Sun11} Sun Q, Zhao J, Chen Y (2011) Probabilistic points-to analysis for
java. In: Knoop J (ed) \textit{CC, Springer}, \textbf{vol 6601} of
\textit{Lecture Notes in Computer Science}, pp 62–-81.

\bibitem{Hardekopf09} Hardekopf B, Lin C (2009) Semi-sparse flow-sensitive pointer
analysis. In: Shao Z, Pierce BC (eds) \textit{POPL, ACM}, pp
226–-238.

\bibitem{Wang09} Wang J, Ma X, Dong W, Xu HF, Liu W (2009) Demand-driven memory
leak detection based on flow- and context-sensitive pointer
analysis. \textit{J Comput Sci Technol} \textbf{24}, 347–-356.

\bibitem{Whaley04} Whaley J, Lam MS (2004) Cloning-based context-sensitive pointer
alias analysis using binary decision diagrams. In: Pugh and
Chambers~\cite{Pugh04}, pp 131–-144.

\bibitem{Zhu04} Zhu J, Calman S (2004) Symbolic pointer analysis revisited. In:
Pugh and Chambers~\cite{Pugh04}, pp 145–-157.

\bibitem{Rugina03} Rugina R, RinardMC (2003) Pointer analysis for structured
parallel programs. \textit{ACMTrans Program Lang Syst} \textbf{25},
pp 70–-116.

\bibitem{Ju99}Ju RDC, Collard JF, Oukbir K (1999) Probabilistic memory
disambiguation and its application to data speculation.
\textit{SIGARCH Comput Archit News} \textbf{27}, pp 27–-30.

\bibitem{Fernandez02} Fern{\'a}ndez M, Espasa R (2002) Speculative alias analysis for
executable code. In: \textit{IEEE PACT, IEEE Computer Society}, pp
222–-231.


\bibitem{Bhowmik03} Bhowmik A, Franklin M (2003) A fast approximate interprocedural
analysis for speculative multithreading compilers. In: Banerjee U,
Gallivan K, Gonzalez A (eds) \textit{ICS, ACM}, pp 32–-41.

\bibitem{Ramalingam96} Ramalingam G (1996) Data flow frequency analysis. In:
\textit{PLDI}, pp 267–-277.

\bibitem{Benton04} Benton N (2004) Simple relational correctness proofs for static
analyses and program transformations. In: Jones ND, Leroy X (eds)
\textit{POPL, ACM}, pp 14–-25.

\bibitem{Nielson02} Nielson HR, Nielson F (2002) Flow logic: A multi-paradigmatic
approach to static analysis. In: Mogensen TAE, Schmidt DA,
Sudborough IH (eds) \textit{The Essence of Computation, Springer},
\textbf{vol 2566} of \textit{Lecture Notes in Computer Science}, pp
223–-244.

\bibitem{Nicola10} Nicola RD, Gorla D, Hansen RR, Nielson F, Nielson HR, Probst CW,
Pugliese R (2010) From flow logic to static type systems for
coordination languages. \textit{Sci Comput Program} \textbf{75}, pp
376–-397.

\bibitem{Beringer04} Beringer L, Hofmann M, Momigliano A, Shkaravska O (2004)
Automatic certification of heap consumption. In: Baader F, Voronkov
A (eds) \textit{LPAR, Springer}, \textbf{vol 3452} of
\textit{Lecture Notes in Computer Science}, pp 347–-362.

\bibitem{Besson06} Besson F, Jensen TP, Pichardie D (2006) Proof-carrying code
fromcertified abstract interpretation and fixpoint compression.
\textit{Theor Comput Sci} \textbf{364}, pp 273–-291.

\bibitem{Barthe09} Barthe G, Gregoire B, Kunz C, Rezk T (2009) Certificate
translation for optimizing compilers. \textit{ACM Trans Program Lang
Syst} \textbf{31}.

\bibitem{Aspinall07} Aspinall D, Beringer L, Momigliano A (2007) Optimisation
validation. \textit{Electr Notes Theor Comput Sci} \textbf{176}, pp
37–-59.

\bibitem{Zilles02} Zilles CB, Sohi GS (2002) Master/slave speculative
parallelization. In: \textit{MICRO, ACM/IEEE}, pp 85–-96.

\bibitem{Joshi04} Joshi R, Bond MD, Zilles CB (2004) Targeted path profiling: Lower
overhead path profiling for staged dynamic optimization systems. In:
\textit{CGO, IEEE Computer Society}, pp 239-–250.

\bibitem{Grunwald93} Grunwald D, Srinivasan H (1993) Data flow equations for
explicitly parallel programs. In: \textit{PPOPP}, pp 159–-168.

\bibitem{Pugh04} Pugh W, Chambers C (eds) (2004) Proceedings of the ACMSIGPLAN
2004 Conference on Programming Language Design and Implementation
2004, Washington, DC, USA, June 9-11, 2004, \textit{ACM}.
\end{thebibliography}
\end{document} 


\begin{equation}

\label{e:}
\end{equation}

\begin{table} 
\caption{}
\label{t:}
\begin{tabular}{lrrrr}
\hline
 &  &  &  &   \\
\hline
\end{tabular}\\
\end{table}

\begin{table*} 
\caption{}
\label{t:}
\begin{tabular}{lrrrr}
\hline
 &  &  &  &   \\
\hline
\end{tabular}\\
\end{table*}

\begin{figure} 
\centering{\includegraphics[width=5cm]{}}
\caption{}
\label{f:} 
\end{figure}

\begin{figure*} 
\centering{\includegraphics[width=12cm]{}}
\caption{}
\label{f:} 
\end{figure*}